\begin{document}

\title{
Counting $(k,l)$-sumsets in groups of prime order
}

\maketitle

\oneauthor{
Vahe SARGSYAN
}{
\href{http://www.msu.ru}{Moscow State University}
}{
\href{mailto:vahe\_sargsyan@ymail.com}{vahe\_sargsyan@ymail.com}
}

\short{
V. Sargsyan
}{
Counting $(k,l)$-sumsets in groups of prime order
}

\begin{abstract}
A subset $A$ of a group $\textbf{G} $ is called $(k, l)$-{\it sumset}, if $A= kB-lB$ for some $B\subseteq \textbf{G}$, where $kB-lB=\{x_1+\dots+x_k-x_{k+1}-\dots-x_{k+l} : x_1,\dots, x_{k+l}\in B\}.$
Upper and lower bounds  for the number $(k, l)$-sumsets in groups of  prime order are provided.
\end{abstract}

\noindent

\section{Introduction}
Let $p$ be a prime number and $k$, $l$ be nonnegative integers with $k + l \geq 2 $. Write $\textbf{Z}_p$ for the group of residues modulo $p$. A subset $A\subseteq \textbf{Z}_p$ is called $(k, l)$-{\it sumset}, if $A=kB-lB$ for some $B\subseteq \textbf{Z}_p$, where $kB-lB=\{x_1+\dots+x_k-x_{k+1}-\dots-x_{k+l} : x_1,\dots, x_{k+l}\in B\}.$
Write $ \textbf{SS}_{k, l}(\textbf{Z}_p)$ for the collection of $(k, l)$-sumsets in $\textbf{Z}_p.$
\par B. Green and I. Ruzsa in ~\cite{GRUZ1} proved
\[
p^2 2^{p/3}\ll |\textbf{SS}_{2, 0}(\textbf{Z}_p)|\leq 2^{p/3 +\theta(p)}
\]
where $\theta(p)/p \rightarrow 0$ as $p \rightarrow \infty$ and $\theta(p)\ll p{(\log_{}{\log_{}p})}^{2/3}{(\log_{}{p})}^{-1/9}$ (hereafter logarithms are to base two).
\par The aim of this work is to obtain bounds for the number $|\textbf{SS}_{k, l}(\textbf{Z}_p)|.$ We prove
\begin{theorem}\label{t6}
Let $p$ be a prime number and $k$,$l$ be nonnegative integers with $k + l \geq 2$. Then there exists a positive constant $\textbf{C}_{k, l}$ such that
\begin{equation}\label{glux}
\textbf{C}_{k,l}2^{p/(2(k+l)-1)} \leq |\textbf{SS}_{k, l}(\textbf{Z}_p)| \leq 2^{(p/(k+l+1)) + (k+l-2) + o(p)}.
\end{equation}
\end{theorem}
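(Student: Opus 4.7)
The plan is to prove the two bounds by separate arguments, writing $s = k+l$ for brevity.

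For the \emph{lower bound}, I would construct an explicit family of $\textbf{C}_{k,l}\,2^{p/(2s-1)}$ pairwise distinct $(k,l)$-sumsets, generalizing the Green--Ruzsa template for $(k,l)=(2,0)$. Let $N = \lfloor p/(2s-1)\rfloor$, fix an offset $a$ of order $p/(2s-1)$, and consider $B = \{0\}\cup B'$ with $B' \subseteq \{a, a+1, \ldots, a+N\}$. Expanding
\[
kB - lB = \bigcup_{0 \le i \le k,\, 0 \le j \le l}(iB' - jB'),
\]
each piece $iB' - jB'$ is supported on an arc of length $(i+j)N$ centred near $(i-j)a \bmod p$. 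Tuning $a$ so that, after grouping by $i-j$, the resulting arcs tile disjoint segments of $\textbf{Z}_p$, the window $[a, a+N]$ is hit only by the $(1,0)$-piece, and hence $B' = (kB - lB) \cap [a, a+N]$: distinct $B'$ yield distinct sumsets. This gives $\Omega(2^{N+1})$ sumsets, and translation symmetry in $\textbf{Z}_p$ supplies the factor $\textbf{C}_{k,l}$. For $(k,l) \ne (2,0)$ the pieces with $i - j = 1$ and $(i,j) \ne (1,0)$ intrude on the recovery window, and the single anchor $\{0\}$ must be replaced by a short granule whose contribution to the offending pieces can be subtracted off deterministically -- this is the role of the granular structure advertised in the keywords.

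For the \emph{upper bound}, I would adapt the Fourier-analytic/structural strategy of Green--Ruzsa. Given $A = kB - lB$, large Fourier coefficients of $1_A$ force large Fourier coefficients of $1_B$, so it is natural to split by the size of $\max_{\xi \ne 0}|\widehat{1_B}(\xi)|$. If some non-trivial $|\widehat{1_B}(\xi)|$ is close to $|B|$, then $B$ concentrates on an arithmetic progression of common difference $\xi^{-1}$; the condition $|kB-lB|<p$ forces the length of this progression to be at most $p/(s+1)+o(p)$, giving at most $p\cdot 2^{p/(s+1)+o(p)}$ such sumsets. Otherwise $B$ is Fourier-pseudorandom, a Pl\"unnecke--Ruzsa estimate forces $kB - lB = \textbf{Z}_p$, and only a bounded number of sumsets arise. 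The additive $k+l-2$ absorbs the enumeration of AP directions together with coset choices for $A$; the $o(p)$ is the Freiman-type rectification error.

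The \emph{main obstacle} is the upper bound, specifically pinning down the sharp additive constant $k+l-2$. For $(k,l)=(2,0)$ Green--Ruzsa have no explicit additive term, and the new feature in the general case is the $(k+1)(l+1)$ distinct pieces appearing in the iterated convolution of $1_B$ with itself and with $1_{-B}$, each potentially contributing slack. Tightening the slack down to $k+l-2$ should come from grouping those pieces by $i-j$ (exactly as in the lower-bound construction) and exploiting the fact that only $s+1$ window positions actually appear in $\textbf{Z}_p$. The lower bound itself reduces to the combinatorial verification that the $2s-1$ slot intervals pack into $\textbf{Z}_p$ without collision; this is the step where the denominator $2s-1$ enters the exponent, and it is the main technical point once the anchor structure has been fixed.
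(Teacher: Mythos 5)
Your lower-bound construction does not reach the exponent $p/(2(k+l)-1)$ for general $(k,l)$, and the step you flag as needing a fix is exactly where it fails. Writing $s=k+l$: with $B=\{0\}\cup B'$ and $B'$ in an interval of length $N$, the pieces $iB'-jB'$ in a given class $d=i-j$ are all centred near $da$ but have different diameters $(i+j)N$, so the disjoint-tiling requirement already forces $\sum_{d}\max\{i+j : i-j=d\}\cdot N\le p$. For $(k,l)=(3,0)$ that sum is $0+1+2+3=6$, giving only $N\le p/6$ instead of $p/5$; for $(k,l)=(1,1)$ it is $4$, giving $N\le p/4$ instead of $p/3$. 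Only for $(2,0)$ does it equal $2s-1$, which is why the Green--Ruzsa template does not transfer, and you offer no concrete mechanism for ``subtracting off'' the intruding pieces. The idea you are missing is \emph{symmetrization}: the paper takes $B=-B\subseteq\{-L,\dots,L\}$ together with two distant anchors $\pm(2L+1)$, so that $kB-lB=(k+l)B$ and all your $d$-classes collapse into a single interval; the free choice is the $\approx L$ positive elements with $L=\lfloor p/(2s-1)\rfloor-1$, distinct symmetric $B$ give distinct sumsets, and instead of a deterministic window-recovery the paper chooses the free elements at random and shows by a union bound (each $x$ is missed with probability at most $(3/4)^{\lfloor|x|/(k+l)\rfloor}$, using $\lfloor|x|/(k+l)\rfloor$ pairwise disjoint representations of $x$) that at least half of these $B$ have $kB-lB$ containing a fixed long progression.

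For the upper bound your dichotomy is neither exhaustive nor the mechanism that produces the stated exponent. A set $B$ can fail to have a near-extremal Fourier coefficient and still not satisfy $kB-lB=\textbf{Z}_p$ (and Pl\"unnecke--Ruzsa bounds $|kB-lB|$ from \emph{above}; it cannot force the sumset to be everything), so the ``otherwise'' branch does not close. The paper's argument is different: it first splits off generating sets $A$ with $|A|\le p/((k+l+1)s)$ and counts those by the entropy bound $\sum_{i\le p/((k+l+1)s)}\binom{p}{i}\le 2^{p/(k+l+1)}$ (Lemma \ref{t11}); for large $A$ it replaces $A$ by an $L$-granular approximant $A'$ (Lemma \ref{t13}) and counts \emph{pairs} $(A',\,kA-lA)$: there are at most $p2^{p/L}$ granular sets (Lemma \ref{t1111}), and for each $A'$ with $|A'|\ge p/(k+l+1)$ the complement of $kA-lA$ lies, up to $\varepsilon_3p$ exceptions, in the complement of the popular-sum set $S_{h,k+l}$, whose size is at most $p/(k+l+1)+(k+l-2)+o(p)$ by the Pollard/Cauchy--Davenport estimate of Lemma \ref{t10}. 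In particular the additive constant $k+l-2$ is the Cauchy--Davenport defect $-(m-1)+1$ for $m=k+l$ summands, not a bookkeeping of the $(k+1)(l+1)$ convolution pieces or of AP directions as you propose. Without the pair-counting over granular skeletons and the Pollard-type popularity bound, your sketch has no route to either the exponent $p/(k+l+1)$ or the constant $k+l-2$.
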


\section{Definitions and auxiliary results}
Let $ \textbf{R}$ be the set of real numbers, $f_i: \textbf{Z}_p \to \textbf{R}$, $i=1, \dots , m,$ and $x\in \textbf{Z}_p$. We set
\[
(f_1 * \dots * f_m)(x)=
\]
\begin{equation}\label{7892}
=\sum_{x_1\in \textbf{Z}_p} \dots \sum_{x_{m-1}\in \textbf{Z}_p}{f_1(x_1) \dots  f_{m-1}(x_{m-1}) f_m(x- x_1 - \dots - x_{m-1})}
\end{equation}
and
\[
\widehat{f}(x) = \sum_{y\in \textbf{Z}_p}{f(y) e^{2\pi i \frac {xy}{p}}}.
\]
The function $\widehat{f}(x)$ is called {\it Fourier transform} of $f.$
\begin{lemma}\label{09876} We have
\begin{equation}\label{01987199}
(\widehat{f_1 * \dots * f_m })(x) =\widehat{f_1}(x) \dots  \widehat{f_m}(x).
\end{equation}
\end{lemma}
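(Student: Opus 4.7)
The plan is to prove the identity by a direct computation: expand the left-hand side using the definitions of the Fourier transform and the $m$-fold convolution, and then unlink the sum over $y$ from the inner convolution sums by means of a change of variables.

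First I would write
\[
(\widehat{f_1 * \dots * f_m})(x) = \sum_{y\in \textbf{Z}_p} (f_1 * \dots * f_m)(y)\, e^{2\pi i \frac{xy}{p}}
\]
and insert the definition \eqref{7892} of the convolution, producing a sum over $y, x_1, \dots, x_{m-1}$ in which $f_m$ is evaluated at $y - x_1 - \dots - x_{m-1}$. The key move is the substitution $x_m := y - x_1 - \dots - x_{m-1}$, which for each fixed $x_1, \dots, x_{m-1}$ is a bijection of $\textbf{Z}_p$ onto itself. This replaces the sum over $y$ by a sum over $x_m$, and rewrites the exponent as $e^{2\pi i x(x_1+\dots+x_m)/p}$.

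After this change of variable the expression factors as
\[
\sum_{x_1,\dots,x_m\in\textbf{Z}_p} \prod_{j=1}^{m} f_j(x_j)\, e^{2\pi i \frac{x x_j}{p}} = \prod_{j=1}^{m}\Bigl(\sum_{x_j\in\textbf{Z}_p} f_j(x_j)\, e^{2\pi i \frac{x x_j}{p}}\Bigr) = \widehat{f_1}(x)\cdots\widehat{f_m}(x),
\]
which is the desired identity. Alternatively one can proceed by induction on $m$: the base case $m=2$ is exactly the same one-line calculation, and the inductive step follows by viewing $f_1 * \dots * f_m$ as $(f_1 * \dots * f_{m-1}) * f_m$ and applying the $m=2$ case together with the inductive hypothesis; this requires checking that the $m$-fold convolution defined in \eqref{7892} is associative, which is itself a routine reindexing.

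There is no real obstacle here: the only thing to be careful about is that the change of variable $x_m = y - x_1 - \dots - x_{m-1}$ is legitimate because summation is over the finite abelian group $\textbf{Z}_p$, so translation is a bijection and Fubini/reordering of finite sums is unproblematic. The lemma is essentially the standard fact that Fourier transform converts convolution to pointwise multiplication, specialised to the finite group $\textbf{Z}_p$.
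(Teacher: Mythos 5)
Your proposal is correct and follows essentially the same route as the paper: expand the Fourier transform of the convolution, split the exponential $e^{2\pi i xy/p}$ along $y = x_1 + \dots + x_m$, and factor the resulting finite sums (the paper performs your substitution $x_m = y - x_1 - \dots - x_{m-1}$ implicitly by summing over $y$ last and recognising $\widehat{f_m}(x)$). The inductive alternative you mention is not needed.
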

\begin{proof}
By definition
\[
(\widehat{f_1 * \dots * f_m })(x) = \sum_{y\in \textbf{Z}_p}{(f_1 * \dots * f_m)(y)e^{2\pi i\frac{yx}{p}}} =\]
\[ = \sum_{y\in \textbf{Z}_p}\sum_{y_1\in \textbf{Z}_p} \dots \sum_{y_{m-1}\in \textbf{Z}_p}{f_1(y_1) \dots f_{m-1}(y_{m-1})}\times
\]
\[
\times f_{m}(y - y_1 - \dots - y_{m-1})\cdot e^{2\pi i\frac{y_{1}x}{p}} \dots e^{2\pi i\frac{y_{m-1}x}{p}}\cdot e^{2\pi i\frac{(y-y_1- \dots -y_{m-1})x}{p}} =
\]
\[
= \sum_{y_1\in \textbf{Z}_p}{f_1(y_1)}\cdot e^{2\pi i\frac{y_{1}x}{p}} \dots \sum_{y_{m-1}\in \textbf{Z}_p}{f_{m-1}(y_{m-1})}\cdot e^{2\pi i\frac{y_{m-1}x}{p}}\times
\]
\[
\times \sum_{y\in \textbf{Z}_p}{f_{m}(y-y_1- \dots -y_{m-1})}\cdot e^{2\pi i\frac{(y-y_1- \dots -y_{m-1})x}{p}}=
 \widehat{f_1}(x) \dots \widehat{f_m }(x).
\]
\end{proof}
Denote the characteristic function of a set $A$ by $ \chi_{A}(x).$ Let $ A_1, \dots, A_m $ be non-empty subsets of $ \textbf{Z}_p$. Then $(\chi_{A_1} \ast \dots \ast \chi_{A_m})(x)$ will be the number of vectors  $(x_1, \dots, x_m) \in A_1 \times \dots \times A_m$ such that $x \equiv x_1 + \dots + x_m \pmod{p}$. Set $ A_1 + \dots + A_m =\{x_1 + \dots + x_m \pmod{p}: x_1 \in A_1, \dots, x_m \in A_m \}.$ We define $ S_{h, m}(A_1, \dots, A_m) = \{x \in \textbf{Z}_p: (\chi_{A_1} \ast \dots \ast \chi_{A_m})(x) \geq h\}, $ where $h> 0$. Further, for any integer $i$ and any $A \subseteq \textbf{Z}_p $ denote the set $\underbrace{A + \dots + A}_i$ by $iA$, and the set $\{p-x: x \in A \}$ by $-A $.
\begin{theorem}\label{t7}
\emph{(Cauchy-Davenport, \cite{NATH})}. Let $ A_1, \dots, A_m $ be non-empty subsets of $ \textbf{Z}_p$. Then $|A_1 + \dots + A_m|\geq \min(p,|A_1| + \dots + |A_m| - (m-1)).$
\end{theorem}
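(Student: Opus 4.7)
The plan is to prove Theorem~\ref{t7} by induction on the number $m$ of summands, reducing the general statement to the two-set case (the classical Cauchy--Davenport inequality).

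For the inductive step, assume the inequality holds for $m-1$ summands and write $A_1 + \dots + A_m = (A_1 + \dots + A_{m-1}) + A_m$. If $|A_1 + \dots + A_{m-1}| = p$, the sum already equals $\textbf{Z}_p$ and the conclusion is immediate. Otherwise, the inductive hypothesis gives $|A_1 + \dots + A_{m-1}| \geq |A_1| + \dots + |A_{m-1}| - (m-2)$, and one application of the two-set Cauchy--Davenport inequality to $A_1 + \dots + A_{m-1}$ and $A_m$ yields
\[
|A_1 + \dots + A_m| \geq \min\bigl(p,\, |A_1| + \dots + |A_m| - (m-1)\bigr),
\]
as required.

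For the base case $m=2$, I would argue by contradiction via Alon's Combinatorial Nullstellensatz. Suppose non-empty $A,B \subseteq \textbf{Z}_p$ satisfy $|A+B| \leq |A|+|B|-2 < p$. Pick any set $C \supseteq A+B$ with $|C| = |A|+|B|-2$ and form
\[
P(x,y) = \prod_{c\in C}(x+y-c) \in \textbf{Z}_p[x,y].
\]
Then $P$ vanishes on all of $A \times B$. On the other hand, $\deg P = (|A|-1) + (|B|-1)$, so the coefficient of $x^{|A|-1}y^{|B|-1}$ in $P$ comes only from the top-degree part $(x+y)^{|A|+|B|-2}$ and equals $\binom{|A|+|B|-2}{|A|-1}$, which is non-zero in $\textbf{Z}_p$ because every factor in its numerator is strictly smaller than $p$. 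The Combinatorial Nullstellensatz then produces $(a,b) \in A \times B$ with $P(a,b) \neq 0$, contradicting the construction of $P$.

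The main obstacle is the base case: this is exactly the point at which primality of $p$ is genuinely used (the statement fails in $\textbf{Z}_n$ for composite $n$, as $A=B=\{0,2\} \subseteq \textbf{Z}_4$ shows). The inductive step itself is essentially formal once the two-set inequality is in hand.
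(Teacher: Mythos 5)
The paper does not prove this statement at all: Theorem~\ref{t7} is quoted as a known classical result with a citation to Nathanson \cite{NATH}, and the author uses it as a black box. So there is nothing in the paper to compare against line by line; your proposal supplies a proof where the paper supplies only a reference. On its own merits, your argument is the standard modern proof and is essentially correct: the induction on $m$ is sound (in the ``otherwise'' branch, $|A_1+\dots+A_{m-1}|<p$ forces the minimum in the inductive hypothesis to be attained by the second term, after which one application of the two-set inequality finishes), and the Combinatorial Nullstellensatz argument for $m=2$ is the familiar Alon--Nathanson--Ruzsa proof, correctly identifying that the nonvanishing of $\binom{|A|+|B|-2}{|A|-1}$ modulo $p$ is where primality enters.

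One small incompleteness is worth fixing: your base case only treats the regime $|A|+|B|-2<p$, i.e.\ where $\min(p,|A|+|B|-1)=|A|+|B|-1$. When $|A|+|B|>p$ the claim is that $A+B=\textbf{Z}_p$, and the negation of the theorem is then $|A+B|<p$, which is not of the form you assume; moreover the binomial coefficient argument breaks down there since the numerator may contain a multiple of $p$. This case needs its own (easy) sentence --- for every $x\in\textbf{Z}_p$ the sets $A$ and $x-B$ have sizes summing to more than $p$ and hence intersect, so $x\in A+B$ --- or alternatively a reduction by shrinking $B$ to a subset $B'$ with $|A|+|B'|-1=p$. With that sentence added, the proof is complete.
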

\begin{theorem}\label{t8}
$(Pollard, \cite{POLLARD})$. Let $ A_1, A_2 $ be non-empty subsets of $ \textbf{Z}_p$. Then
\[
|S_{1, 2}(A_1,A_2)| + \dots  + |S_{t, 2}(A_1,A_2)| \geq t \min(p,|A_1| + |A_2| -t),
\]
where  $t\leq \min(|A_1|,|A_2| ).$
\end{theorem}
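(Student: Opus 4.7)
The plan is to reduce Pollard's inequality to repeated applications of the Cauchy--Davenport theorem (Theorem~\ref{t7}) by constructing, for each $h \in \{1, \dots, t\}$, a pair of subsets $B_h \subseteq A_1$, $C_h \subseteq A_2$ whose sumset lies inside the $h$-th level set $S_{h,2}(A_1, A_2)$. The starting point is the elementary identity
\[
\sum_{h=1}^{t} |S_{h,2}(A_1, A_2)| \;=\; \sum_{x \in \textbf{Z}_p} \min\!\bigl(t,\, (\chi_{A_1} \ast \chi_{A_2})(x)\bigr),
\]
obtained by counting pairs $(x, h)$ with $1 \le h \le t$ and $x \in S_{h,2}(A_1, A_2)$ in two ways. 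This reformulation shows that what must be bounded from below is the truncated representation function of $A_1, A_2$ summed over $\textbf{Z}_p$.

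The main step is to produce $B_h$ and $C_h$ satisfying (i) $B_h + C_h \subseteq S_{h,2}(A_1, A_2)$ and (ii) $|B_h| + |C_h| \ge |A_1| + |A_2| - (h-1)$. I would build these recursively, starting with $B_1 = A_1$ and $C_1 = A_2$, for which (i) and (ii) are immediate. Passing from step $h$ to $h+1$, either $B_h + C_h$ is already contained in $S_{h+1,2}(A_1, A_2)$, in which case one may take $B_{h+1} = B_h$ and $C_{h+1} = C_h$, or there is some $y \in B_h + C_h$ with representation count exactly $h$; in the latter case the plan is to excise a single carefully chosen element of $B_h$ or $C_h$, selected by a pigeonhole argument on the $h$ representations of such \emph{exposed} $y$'s, so that every surviving element of the new sumset has representation count at least $h+1$. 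Since at most one element is removed per step, property (ii) persists by induction.

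Granting the construction, Cauchy--Davenport applied to each pair $(B_h, C_h)$ gives
\[
|S_{h,2}(A_1, A_2)| \;\ge\; |B_h + C_h| \;\ge\; \min\bigl(p,\, |A_1| + |A_2| - h\bigr) \;\ge\; \min\bigl(p,\, |A_1| + |A_2| - t\bigr)
\]
for every $h \le t$, and summing over $h = 1, \dots, t$ yields the claimed bound $t \min(p,\, |A_1| + |A_2| - t)$.

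The main obstacle will be the single-element removal at each inductive step: one has to choose the removed element so that it simultaneously eliminates every exposed $y$ with representation count exactly $h$, which is not automatic if several such $y$'s occur at the same level, and one must also verify that no surviving element of the new sumset inherits a lower representation count. If the direct construction turns out to be too delicate, an alternative route is to apply Dyson-type $e$-transforms (Kemperman compressions) to reduce to the case where $A_1$ and $A_2$ are arithmetic progressions with a common difference, for which the level sets $S_{h,2}$ can be computed explicitly from the representation function and Pollard's inequality holds with equality; checking that such a compression does not increase the Pollard sum then transfers the general inequality to that extremal case.
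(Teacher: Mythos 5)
Your opening identity $\sum_{h=1}^{t}|S_{h,2}(A_1,A_2)| = \sum_{x\in \textbf{Z}_p}\min(t,(\chi_{A_1}\ast\chi_{A_2})(x))$ is correct and is indeed the standard starting point, but the main step of your plan is impossible: property (ii), $|B_h|+|C_h|\geq |A_1|+|A_2|-(h-1)$, together with (i) would give $|S_{h,2}(A_1,A_2)|\geq \min(p,|A_1|+|A_2|-h)$ via Cauchy--Davenport, and this intermediate inequality is simply false for $h\geq 2$. Take $A_1=A_2=\{0,1,\dots,m-1\}$ with $2m<p$. Then $(\chi_{A_1}\ast\chi_{A_2})(x)=\min(x+1,\,2m-1-x,\,m)$ on $\{0,\dots,2m-2\}$, so $|S_{h,2}(A_1,A_2)|=2m-2h+1=|A_1|+|A_2|-2h+1$, which is strictly less than $|A_1|+|A_2|-h$ as soon as $h\geq 2$; equivalently, any $B,C$ with $B+C\subseteq S_{h,2}(A_1,A_2)$ must satisfy $|B|+|C|\leq |S_{h,2}|+1=|A_1|+|A_2|-2(h-1)$, so your invariant degrades by \emph{two} per level, not one. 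The failure is visible already in the step from $h=1$ to $h=2$ in this example: the two endpoints $0$ and $2m-2$ of $A_1+A_2$ each have exactly one representation, and they use disjoint pairs $(0,0)$ and $(m-1,m-1)$, so no single excised element can eliminate both exposed points; two deletions (one at each end) are forced, and your pigeonhole selection cannot repair this.

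Moreover, even the corrected pointwise bound $|S_{h,2}|\geq \min(p,|A_1|+|A_2|-2h+1)$ (whose proof is itself nontrivial in $\textbf{Z}_p$) would not yield the theorem in full range: summing it over $h=1,\dots,t$ gives exactly $t(|A_1|+|A_2|-t)$ when no truncation at $p$ occurs, but in the regime $|A_1|+|A_2|-t\geq p>|A_1|+|A_2|-2t+1$ the truncated terms sum to strictly less than $tp=t\min(p,|A_1|+|A_2|-t)$, so the strategy of summing level-by-level Cauchy--Davenport bounds is structurally insufficient; the $\min(p,\cdot)$ must be handled globally. Your fallback is closer to how the theorem is actually proved, but the sentence ``checking that such a compression does not increase the Pollard sum'' conceals the entire content of the argument: Pollard's proof (the paper cites \cite{POLLARD} and gives no proof of Theorem \ref{t8}) proceeds by induction using the Dyson $e$-transform applied directly to the truncated sum $\sum_x \min(t,r(x))$, with a separate analysis of the degenerate case, and in $\textbf{Z}_p$ one cannot compress arbitrary pairs down to arithmetic progressions with a common difference, so the proposed reduction to an extremal AP configuration does not go through as stated.
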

Theorems \ref{t7}, \ref{t8} imply the following two statements.
\begin{lemma}\label{t9}
Let $ A_1, \dots, A_m $ non-empty subsets of $ \textbf{Z}_p$. Then
\[
|S_{1, m}(A_1, \dots , A_m)| + \dots + |S_{t, m}(A_1, \dots , A_m)| \geq \] \[\geq t \min(p,|A_1| + \dots + |A_m| - t - m + 2),
\]
where $t\leq \min(|A_1|, \dots , |A_m|).$
\end{lemma}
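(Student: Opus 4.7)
The plan is to reduce the $m$-fold problem to the $2$-fold setting (Pollard) by collapsing the first $m-1$ sets into a single ordinary sumset and then applying Theorem~\ref{t8} to that sumset together with $A_m$. Concretely, let $B=A_1+\dots+A_{m-1}$ and note the tautology $B=S_{1,m-1}(A_1,\dots,A_{m-1})$. By Cauchy--Davenport (Theorem~\ref{t7}),
\[
|B|\ \geq\ \min\!\bigl(p,\ |A_1|+\dots+|A_{m-1}|-m+2\bigr).
\]
This already brings all the contributions of $A_1,\dots,A_{m-1}$ into the size of a single set.

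The key step is the containment
\[
S_{i,2}(B,A_m)\ \subseteq\ S_{i,m}(A_1,\dots,A_m)\qquad(i\geq 1).
\]
To see this I would argue as follows: if $x\in S_{i,2}(B,A_m)$, there are $i$ distinct pairs $(b_1,a_1),\dots,(b_i,a_i)\in B\times A_m$ with $b_k+a_k=x$. For each $k$, since $b_k\in B$, pick one representation $b_k=y_1^{(k)}+\dots+y_{m-1}^{(k)}$ with $y_j^{(k)}\in A_j$. The $m$-tuples $(y_1^{(k)},\dots,y_{m-1}^{(k)},a_k)$ all sum to $x$, and since the $a_k$'s are pairwise distinct so are these tuples. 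Hence $(\chi_{A_1}\ast\dots\ast\chi_{A_m})(x)\geq i$.

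With this containment in hand, summing over $i=1,\dots,t$ and invoking Pollard (Theorem~\ref{t8}) on the pair $(B,A_m)$ yields
\[
\sum_{i=1}^{t}|S_{i,m}(A_1,\dots,A_m)|\ \geq\ \sum_{i=1}^{t}|S_{i,2}(B,A_m)|\ \geq\ t\,\min\!\bigl(p,\ |B|+|A_m|-t\bigr).
\]
Substituting the Cauchy--Davenport bound for $|B|$ and splitting into the two cases ($\sigma'-m+2\geq p$ versus $<p$, where $\sigma'=|A_1|+\dots+|A_{m-1}|$) gives $\min(p,|B|+|A_m|-t)\geq \min(p,|A_1|+\dots+|A_m|-t-m+2)$, which is exactly the required bound.

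The only subtlety is checking that Pollard's hypothesis $t\leq \min(|B|,|A_m|)$ is satisfied. The bound $t\leq |A_m|$ is immediate from the assumption, while $t\leq |B|$ follows since, by Cauchy--Davenport, $|B|\geq \min(p,(m-1)t-m+2)\geq t$ for $m\geq 2$ and $t\geq 1$. I expect the distinctness argument in the containment step to be the main conceptual point worth spelling out; everything else is a mechanical combination of Theorems~\ref{t7} and~\ref{t8}.
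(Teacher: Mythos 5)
Your proof is correct and is essentially the paper's own argument: collapse all but one of the sets into a single ordinary sumset, bound its size by Cauchy--Davenport, and apply Pollard to the resulting pair via the containment $S_{i,2}\subseteq S_{i,m}$. The only cosmetic difference is that the paper keeps the smallest set $A_1$ separate (making Pollard's hypothesis $t\leq|A_1|$ immediate) and collapses $A_2+\dots+A_m$, while you keep $A_m$ and verify $t\leq|B|$ separately; your explicit distinctness argument for the containment is a welcome detail the paper leaves implicit.
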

\begin{proof}
Without loss of generality we assume $|A_1|= \min(|A_1|, \dots , |A_m|)$. By Theorem \ref{t8} we have
\[
|S_{1, 2}(A_1, (A_2 + \dots + A_m))| + \dots + |S_{t, 2}(A_1,( A_2 + \dots + A_m))| \geq\]
\begin{equation}\label{1}\geq t\min(p,|A_1| + |A_2 + \dots + A_m| -t),
\end{equation}
where $t\leq |A_1|.$\\
On the other hand by Theorem \ref{t7} we have
\begin{equation}\label{2} |A_2 + \dots + A_m|\geq \min(p,|A_2| + \dots + |A_m| - (m-2)). \end{equation}
Substituting (\ref{2}) in (\ref{1}), we obtain
\[
|S_{1, m}(A_1, \dots , A_m)| + \dots + |S_{t, m}(A_1, \dots , A_m)| \geq\] \[\geq |S_{1, 2}(A_1, (A_2 + \dots + A_m))| + \dots + |S_{t, 2}(A_1,( A_2 + ... + A_m))| \geq \]
\[
\geq t \min(p,|A_1| + \dots + |A_m| - t - m + 2).
\]
\end{proof}
\begin{lemma}\label{t10}
Let $ A_1, \dots, A_m $ be non-empty subsets of $ \textbf{Z}_p$ and $h\leq \min{(|A_1|, \dots ,|A_m|)}$. Then
\[
|S_{h, m}(A_1, \dots , A_m)|\geq \min(p,|A_1| + \dots + |A_m| - m + 2) - 2(hp)^{1/2}.
\]
\end{lemma}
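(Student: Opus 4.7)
My plan is to sandwich the partial sum $\sum_{i=1}^{t}|S_{i,m}|$ from both sides and then optimise an auxiliary parameter $t$. Let $f=\chi_{A_1}\ast\dots\ast\chi_{A_m}$, so that $S_{i,m}=\{x\in\textbf{Z}_p:f(x)\geq i\}$ and the sizes $|S_{1,m}|\geq|S_{2,m}|\geq\cdots$ form a non-increasing sequence bounded by $p$. Write $N=|A_1|+\dots+|A_m|-m+2$, so that the claim is $|S_{h,m}|\geq \min(p,N)-2\sqrt{hp}$.

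For the lower bound, Lemma~\ref{t9} gives, for every integer $t$ with $h\leq t\leq \min_i|A_i|$,
\[
\sum_{i=1}^{t}|S_{i,m}|\geq t\min(p,N-t).
\]
For the matching upper bound, monotonicity yields $|S_{i,m}|\leq p$ for $i<h$ and $|S_{i,m}|\leq|S_{h,m}|$ for $i\geq h$, hence
\[
\sum_{i=1}^{t}|S_{i,m}|\leq (h-1)p+(t-h+1)|S_{h,m}|.
\]
Combining the two estimates,
\[
|S_{h,m}|\geq \frac{t\min(p,N-t)-(h-1)p}{t-h+1}.
\]

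For the optimisation, set $u=t-h+1\geq 1$. If $N-t\geq p$ the right-hand side immediately simplifies to $p$, already implying the claim. Otherwise $\min(p,N-t)=N-t$, and elementary manipulation rewrites the bound as
\[
|S_{h,m}|\geq N-2(h-1)-u-\frac{(h-1)(p-N+h-1)}{u}.
\]
Choosing $u\approx\sqrt{(h-1)(p-N+h-1)}$ (rounded to an admissible integer) and applying AM--GM to the last two terms reduces the desired inequality $|S_{h,m}|\geq N-2\sqrt{hp}$ to $(h-1)+\sqrt{(h-1)(p-N+h-1)}\leq\sqrt{hp}$. Squaring, this is equivalent to $(h-1)(N-2\sqrt{hp})+p\geq 0$, which holds outright whenever $N\geq 2\sqrt{hp}$ and otherwise forces the target bound $\min(p,N)-2\sqrt{hp}$ to be non-positive, making the conclusion vacuous.

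The hard part is the bookkeeping, not the idea: one must ensure the optimal $u$ rounds to an integer inside the admissible window $[1,\min_i|A_i|-h+1]$, split into the regimes $N\leq p$, $N>p$, and the trivial low-density regime $\min(p,N)\leq 2\sqrt{hp}$, and absorb the residual $2(h-1)$ produced by the optimisation into the cleaner $2\sqrt{hp}$ of the statement via the squaring step above. None of these is conceptually deep, but all need to be checked to recover the stated constant.
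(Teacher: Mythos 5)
Your core strategy is exactly the paper's: bound $\sum_{i=1}^{t}|S_{i,m}(A_1,\dots,A_m)|$ from below by Lemma~\ref{t9}, from above using the monotonicity of $i\mapsto|S_{i,m}|$, and then choose $t$ to balance the two. The difference is only in execution, and here the paper's version is both simpler and more complete. The paper deliberately uses the \emph{lossier} upper bound $\sum_{i=1}^{t}|S_{i,m}|\leq hp+t\,|S_{h,m}|$ (all of the first $h$ terms bounded by $p$, and still $t$ full copies of $|S_{h,m}|$); dividing by $t$ and substituting $t=(hp)^{1/2}$ then gives $|S_{h,m}|\geq\min(p,N-(hp)^{1/2})-(hp)^{1/2}\geq\min(p,N)-2(hp)^{1/2}$ in one line, with no optimisation, no AM--GM, and no case split on the sign of $p-N+h-1$. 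Your sharper bound $(h-1)p+(t-h+1)|S_{h,m}|$ buys nothing in the final statement but forces the genuine optimisation over $u$ whose verification you defer.

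That deferred ``bookkeeping'' is where the gap lies, and it is not purely routine. First, when $N>p+h-1$ the quantity $(h-1)(p-N+h-1)$ is negative, so your proposed optimum $u^{*}$ is not even real; that regime needs its own (short, but absent) argument. Second, $u^{*}$ can fall far outside the admissible window $[1,\min_i|A_i|-h+1]$: take $m=2$, $|A_1|=|A_2|=M$ and $h$ close to $M$, so the window collapses to essentially $\{1\}$ while $u^{*}\approx\sqrt{h(p-2M)}$ is large; the bound you then actually get at the boundary value of $u$ is much weaker than $N-2(h-1)-2\sqrt{(h-1)(p-N+h-1)}$ and does not imply the statement directly. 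In all such cases the target $\min(p,N)-2(hp)^{1/2}$ turns out to be non-positive, so the vacuousness escape you invoke does close the argument, but you have asserted rather than proved this, and it is precisely the part that requires work. (To be fair, the paper's own substitution $t=(hp)^{1/2}$ quietly raises the same admissibility question with respect to the hypothesis $t\leq\min_i|A_i|$ of Lemma~\ref{t9}, and does not address it either.) If you adopt the paper's cruder upper bound on the partial sum, the entire case analysis evaporates and the proof is three lines.
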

\begin{proof}
Note that $|S_{i, m}(A_1, \dots , A_m)| \geq |S_{j, m}(A_1, \dots , A_m)|$ for $i\leq j.$ Choose $h \leq t \leq \min{(|A_1|, \dots ,|A_m|)}$. By Lemma \ref{t9} we have
\[
t \min(p,|A_1| + \dots + |A_m| - t - m + 2) \leq \] \[ \leq |S_{1, m}(A_1, \dots , A_m)| + \dots + |S_{t, m}(A_1, \dots , A_m)| \leq\]  \[\leq hp + t |S_{h, m}(A_1, \dots , A_m)|.
\]
Putting $t=(hp)^{1/2}$, we get
\[
\min(p,|A_1| + \dots + |A_m| - m + 2) - 2(hp)^{1/2} \leq\] \[ \leq \min(p,|A_1| + \dots  + |A_m| - m - (hp)^{1/2} + 2) - (hp)^{1/2}\leq \]
\[
\leq |S_{h, m}(A_1, \dots , A_m)|.
\]
\end{proof}
\begin{lemma}\label{t11} Set $\textbf{T}_{r, s}(\textbf{Z}_p)=\{A\subset \textbf{Z}_p: |A|\leq p/(r+1)s\}$. Then there exists $s$ such that
\begin{equation}\label{14} \left|\textbf{T}_{r, s}(\textbf{Z}_p)\right| \leq 2^{p/(r+1)}.\end{equation}
\end{lemma}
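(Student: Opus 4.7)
The plan is to interpret $\textbf{T}_{r,s}(\textbf{Z}_p)$ as a set family and reduce the claim to a standard partial binomial sum estimate, then choose $s$ large enough to absorb the resulting constants. Setting $N = \lfloor p/((r+1)s)\rfloor$, the definition gives
\[
|\textbf{T}_{r,s}(\textbf{Z}_p)| \;=\; \sum_{k=0}^{N} \binom{p}{k}.
\]
For $s \geq 2$ and $p$ large, we have $N \leq p/2$, so the partial sum is at most $(N+1)\binom{p}{N}$, and the standard estimate $\binom{p}{N} \leq (ep/N)^N$ yields
\[
|\textbf{T}_{r,s}(\textbf{Z}_p)| \;\leq\; (N+1)\bigl(e(r+1)s\bigr)^{N}.
\]

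I would then take logarithms and compare with the target exponent $p/(r+1)$. Writing $N \leq p/((r+1)s)$, we obtain
\[
\log|\textbf{T}_{r,s}(\textbf{Z}_p)| \;\leq\; \frac{p}{(r+1)s}\log\!\bigl(e(r+1)s\bigr) + \log(N+1).
\]
To achieve (\ref{14}) it suffices to have $\log(e(r+1)s)/s \leq 1 - \varepsilon$ for some $\varepsilon > 0$ that swallows the lower-order $\log(N+1)$ term. Since $\log(e(r+1)s)/s \to 0$ as $s \to \infty$, any $s = s(r)$ chosen sufficiently large works, and this is precisely the value of $s$ that the lemma produces.

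The only subtlety is verifying that the inequality holds for all sufficiently large $p$; for bounded $p$ one may alternatively use that $\textbf{T}_{r,s}(\textbf{Z}_p)$ is empty once $s > p/(r+1)$, or adjust the constant implicit in the final choice of $s$. Overall, no substantive number-theoretic input is needed: the statement is a purely combinatorial counting bound, and the only real decision is which partial binomial sum estimate to quote. Using the binary entropy bound $\sum_{k \leq \alpha p}\binom{p}{k} \leq 2^{H(\alpha)p}$ for $\alpha \leq 1/2$ with $\alpha = 1/((r+1)s)$ gives the cleanest version: $H(\alpha) \to 0$ as $s \to \infty$, so $H(\alpha) \leq 1/(r+1)$ for all sufficiently large $s$, which is (\ref{14}) directly. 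I expect the main (mild) obstacle to be choosing the cleanest form of the argument and handling small-$p$ edge cases without cluttering the write-up.
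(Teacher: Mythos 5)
Your proposal is correct and takes essentially the same route as the paper: the paper quotes the bound $\sum_{0\leq i\leq m}\binom{n}{i}\leq (en/m)^m$ with $n=p$, $m=p/(r+1)s$, and picks $s$ satisfying $es(r+1)\leq 2^s$, which gives $\left|\textbf{T}_{r,s}(\textbf{Z}_p)\right|\leq (es(r+1))^{p/(r+1)s}\leq 2^{p/(r+1)}$ directly. The only (cosmetic) difference is that by bounding the entire partial sum at once the paper avoids your extra $(N+1)$ factor, so it needs no $\varepsilon$ of slack and no large-$p$ caveat.
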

\begin{proof}
Let $n,m$ be positive integers, $1\leq m \leq n$. Then (see $\text{Lemma 6.8}$, \cite{SAP4})
\begin{equation}\label{iiipop}
\sum_{0\leq i\leq m}{n\choose i} \leq {\left(\frac{en}{m}\right)}^m.
\end{equation}
We choose $s$ such that
\begin{equation}\label{15}
es(r+1)\leq 2^s.
\end{equation}
Then by (\ref{iiipop}) we have (putting $n=p$ and $m=p/(r+1)s$)
\[
\left|\textbf{T}_{r, s}(\textbf{Z}_p)\right|=\sum_{0\leq i\leq p/(r+1)s}{p\choose i}\leq {\left(es(r+1)\right)}^{p/(r+1)s}\leq  {\left(2^s\right)}^{p/(r+1)s}= 2^{p/(r+1)}.
\]
\end{proof}
Let $L$ be a positive integer. For each $y  \in \{0, \dots, p-1 \}$ we define a partition $ \textbf{R}_{y, L} $ of $\textbf{Z}_p$ on the intervals of the form $J_i^y = \{(iL + 1 + y) \pmod{p}, \dots, ((i +1) L + y) \pmod{p} \}$, $0 \leq i \leq \lfloor p/L \rfloor -1$. All intervals are $J_i^y$ of $\textbf{R}_{y, L}$ have length $L$, and the set $J_y = \textbf{Z}_p \setminus \bigcup_{i}^{}{J_i^y}$ has cardinality $p-L\lfloor p/L \rfloor <L$.
The set $J_y$ is called {\it remainder} partition $\textbf{R}_{y, L} $.
In what follows we fix $y\in \{0, \dots, p-1 \}$ and consider the corresponding partition $\textbf{R}_{y, L}$. For every $A \subseteq \textbf{Z}_p $ and any integer $d$ define $d \star A = \{da \pmod {p}: a \in A \}.$ The set $d \star A $ is called {\it dilation} of $ A $. The set $A \subseteq \textbf{Z}_p$ is called {\it $L$-granular} (see \cite{GRUZ1}), if some dilation of $A$ is a union of some of the intervals $J_i^y $ (other than remainder). We denote the family of $L$-granular subsets of $\textbf{Z}_p$ by $\textbf{G}_L(\textbf{Z}_p).$
\begin{lemma}\label{t1111}  We have
\begin{equation}\label{818}
|\textbf{G}_L(\textbf{Z}_p)|  \leq  p2^{p/L}.
\end{equation}
\end{lemma}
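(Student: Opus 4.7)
The plan is to bound $|\textbf{G}_L(\textbf{Z}_p)|$ by counting the pairs (dilation, union of full intervals) that can produce an $L$-granular set, and noting that every $L$-granular set arises from at least one such pair.

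First I would describe the structure: for the fixed $y$, the partition $\textbf{R}_{y,L}$ consists of the $\lfloor p/L \rfloor$ full intervals $J_0^y, J_1^y, \dots, J_{\lfloor p/L\rfloor-1}^y$ together with the remainder $J_y$. A union of full intervals is determined by choosing a subset $S\subseteq\{0,1,\dots,\lfloor p/L\rfloor -1\}$ and setting $U_S=\bigcup_{i\in S}J_i^y$. The number of such unions is therefore
\[
2^{\lfloor p/L\rfloor}\leq 2^{p/L}.
\]

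Next I would unpack the definition of $L$-granularity: $A\in\textbf{G}_L(\textbf{Z}_p)$ means there exists a dilation parameter $d$ such that $d\star A$ is one of the unions $U_S$ above. For each nonzero $d\in\textbf{Z}_p$ the map $a\mapsto da\pmod p$ is a bijection, so $A$ is recovered from $(d,S)$ by $A=d^{-1}\star U_S$. Thus every $L$-granular set is the image of at least one pair $(d,S)$ with $d\in\textbf{Z}_p\setminus\{0\}$ and $S\subseteq\{0,\dots,\lfloor p/L\rfloor-1\}$ under the map $(d,S)\mapsto d^{-1}\star U_S$.

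Finally I would conclude by the trivial product bound: the number of admissible pairs $(d,S)$ is at most $(p-1)\cdot 2^{\lfloor p/L\rfloor}\leq p\cdot 2^{p/L}$, and hence
\[
|\textbf{G}_L(\textbf{Z}_p)|\leq p\cdot 2^{p/L},
\]
which is the desired inequality. There is no real obstacle here; the only point to be careful about is that several pairs $(d,S)$ may yield the same set $A$, but that only strengthens the inequality since we are upper-bounding. The bound is clean because we over-count rather than aiming for an exact formula.
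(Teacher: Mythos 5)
Your proof is correct and essentially the same as the paper's: both bound $|\textbf{G}_L(\textbf{Z}_p)|$ by a product of at most $p$ choices for one positional parameter and at most $2^{\lfloor p/L\rfloor}\leq 2^{p/L}$ choices of a union of full intervals, noting that every $L$-granular set arises from at least one such pair. The only cosmetic difference is that you charge the factor $p$ to the dilation parameter $d$ (which matches the definition as stated, with $y$ fixed), whereas the paper charges it to the number of distinct partitions $\textbf{R}_{y,L}$; the resulting count is identical.
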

\begin{proof} Denote the number of subsets of intervals (other than remainder) of the partition $R_{y, L}$ of $\textbf{Z}_p$ by $g(R_{y, L})$, and the number of different partitions $R_{y, L}$ of $\textbf{Z}_p$ by $r(L)$. It is obvious that
\begin{equation}\label{karab}
|\textbf{G}_L(\textbf{Z}_p)|\leq g(R_{y,L})r(L).
\end{equation}
Note that the number of intervals (other than remainder) of the partition $ \textbf{R}_{y, L}$ of $\textbf{Z}_p$ is equal to $ \lfloor p/L \rfloor$, and the number of different partitions $\textbf{R}_{y, L}$ of $\textbf{Z}_p$ is at most $p$. This and (\ref{karab}) imply the inequality (\ref{818}).
\end{proof}
\begin{lemma}\label{t13}
Let $A\subseteq \textbf{Z}_p$ have size $\alpha p,$ and let $\varepsilon_1, \varepsilon_2, \varepsilon_3$ be positive real numbers and $L>0$, $k$, $l$ be nonnegative integers satisfying $k + l \geq 2.$ Suppose that
\begin{equation}\label{3}
p > (\sqrt{8(k+l)}L)^{4^{2(k+l)}\alpha^{2(k+l-1)} \varepsilon_1^ {-2(k+l)} \varepsilon_2^{-2(k+l-1)} \varepsilon_3^{-1}}. \end{equation}
Then there exists a set $A'\subseteq \textbf{Z}_p$ with the following properties:\\
$(i)$ $A'$ is $L$-granular;\\
$(ii)$ $|A\setminus A'|\leq \varepsilon_1 p;$\\
$(iii)$ the set $kA-lA$ contains all $x\in \textbf{Z}_p$ for which \\ $(\underbrace{\chi_{A'}\ast \dots \ast\chi_{A'}}_{k}\ast \underbrace{\chi_{-A'}\ast \dots \ast\chi_{-A'}}_{l})(x) \geq {(\varepsilon_2 p)}^{k+l-1}$, with at most $\varepsilon_3 p$ exceptions.
\end{lemma}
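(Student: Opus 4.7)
Reduce (iii) to an $L^2$ bound, obtain it via a telescoping Fourier argument built on Lemma~\ref{09876}, and construct $A'$ as a thresholded granular partition of $A$ for a well-chosen dilation. Writing
\[f_B(x) := (\underbrace{\chi_B\ast\dots\ast\chi_B}_{k}\ast\underbrace{\chi_{-B}\ast\dots\ast\chi_{-B}}_{l})(x),\]
so that $x\in kA-lA$ iff $f_A(x)\geq 1$, and $T:=(\varepsilon_2 p)^{k+l-1}$, every $x$ violating (iii) has $f_A(x)=0$ and $f_{A'}(x)\geq T$, hence $|f_A(x)-f_{A'}(x)|\geq T$. Chebyshev's inequality then yields $|\{\text{bad } x\}|\leq \|f_A-f_{A'}\|_2^2/T^2$, so it suffices to construct an $L$-granular $A'$ satisfying (ii) together with $\|f_A-f_{A'}\|_2^2\leq \varepsilon_3 pT^2$.

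\textbf{Construction and (i), (ii).} For a dilation $d\in\textbf{Z}_p^*$ (to be selected later) and threshold $\delta:=\varepsilon_1/2$, set
\[A' := d^{-1}\star\bigcup\bigl\{J_i^y\;:\;|(d\star A)\cap J_i^y|\geq\delta L\bigr\}.\]
Property (i) is then immediate. Since each discarded interval contributes fewer than $\delta L$ elements to $d\star A$ and $|J_y|<L$, one gets $|A\setminus A'|<\delta p+L\leq\varepsilon_1 p$ whenever $p\geq 2L/\varepsilon_1$ (implied by the hypothesis), giving (ii). Counting selected intervals also yields the side bound $|A'|\leq|A|/\delta\leq 2\alpha p/\varepsilon_1$.

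\textbf{Fourier estimate.} By Lemma~\ref{09876} and Plancherel's identity (using $\widehat{\chi_{-A}}(t)=\overline{\widehat{\chi_A}(t)}$),
\[\|f_A-f_{A'}\|_2^2 = \frac{1}{p}\sum_{t\in\textbf{Z}_p}\bigl|\widehat{\chi_A}(t)^k\overline{\widehat{\chi_A}(t)}^l - \widehat{\chi_{A'}}(t)^k\overline{\widehat{\chi_{A'}}(t)}^l\bigr|^2.\]
Telescoping the difference of products in $k+l$ steps, using the trivial pointwise bound $|\widehat{\chi_A}(t)|,|\widehat{\chi_{A'}}(t)|\leq 2\alpha p/\varepsilon_1$, and applying Parseval to $\chi_A-\chi_{A'}$ yields
\[\|f_A-f_{A'}\|_2^2 \leq (k+l)^2(2\alpha p/\varepsilon_1)^{2(k+l-1)}\,|A\triangle A'|.\]
Matching with $\varepsilon_3 p T^2$ reduces (iii) to producing a dilation $d$ with $|A\triangle A'_d|\leq C(k,l)\,\varepsilon_3(\varepsilon_1\varepsilon_2/\alpha)^{2(k+l-1)}p$.

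\textbf{Main obstacle: choosing $d$.} This last step is the crux. For a generic dilation, middling intervals contribute a spurious mass to $|A'_d\setminus A|$ of order $\alpha p/\varepsilon_1$, far exceeding the target above. A good $d$ must therefore be found non-trivially; I expect this to be done either by averaging an appropriate $2(k+l)$-th moment of the Fourier discrepancy over $d\in\textbf{Z}_p^*$, or by iteratively refining the dilation by the dominant Fourier coefficient of the current error. The hypothesis $p>(\sqrt{8(k+l)}L)^{M}$ with $M:=4^{2(k+l)}\alpha^{2(k+l-1)}\varepsilon_1^{-2(k+l)}\varepsilon_2^{-2(k+l-1)}\varepsilon_3^{-1}$ is tuned precisely so that this averaging or iteration succeeds: the exponents in $M$ match the combinatorial constants $4^{2(k+l)}$ arising from the telescoping and moment expansion, the $(\alpha/\varepsilon_1)^{2(k+l-1)}$ from the $|A'|$ bound above, the $\varepsilon_2^{-2(k+l-1)}$ from $T^2$, the $\varepsilon_3^{-1}$ from the target exceptional mass, and a final $\varepsilon_1^{-2}$ factor accounting for the tension with the allowed loss $|A\setminus A'|\leq\varepsilon_1 p$. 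Inserting the resulting $d$ into the Fourier estimate completes (iii).
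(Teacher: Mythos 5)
Your construction of $A'$ and your verification of (i) and (ii) coincide with the paper's. The problem is (iii): your reduction to the bound $\|f_A-f_{A'}\|_2^2\leq\varepsilon_3 pT^2$, and from there to making $|A\triangle A'_d|$ tiny, is not a gap you have merely postponed --- it is a dead end. The set $A'$ is a union of \emph{full} intervals on each of which $A$ is only guaranteed density $\varepsilon_1/2$, so $|A'\setminus A|$ is generically of order $p$: if $A$ is uniformly spread with density $\alpha\geq\varepsilon_1/2$ in every interval of every dilated partition, then $A'_d$ is essentially all of $\textbf{Z}_p$ for \emph{every} choice of $d$, and $|A'_d\setminus A|\approx(1-\alpha)p$. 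No averaging over dilations or iterative refinement can produce the bound $|A\triangle A'_d|\leq C(k,l)\varepsilon_3(\varepsilon_1\varepsilon_2/\alpha)^{2(k+l-1)}p$ you need, so the "main obstacle" you flag is a genuine obstruction to this route, not a technical step to be filled in.

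The paper avoids comparing $f_A$ with $f_{A'}$ altogether. It introduces the box-smoothed indicators $\chi_1=\frac{1}{|\mathcal{J}|}\chi_A\ast\chi_{\mathcal{J}}$ and $\chi_2=\frac{1}{|\mathcal{J}|}\chi_{-A}\ast\chi_{\mathcal{J}}$ with $\mathcal{J}=\{-(L-1),\dots,L-1\}$, and compares $f_A$ with $g:=\chi_1^{\ast k}\ast\chi_2^{\ast l}$. Since $\widehat{\chi_1}=\widehat{\chi_A}\,f$ where $f$ is the normalized kernel of $\mathcal{J}$, Parseval gives $\|f_A-g\|_2^2\leq p^{-1}\bigl(\sup_x|\widehat{\chi_A}(x)|^{k-1}|\widehat{\chi_{-A}}(x)|^{l}|1-f^{k+l}(x)|\bigr)^2\sum_x|\widehat{\chi_A}(x)|^2$, and the whole Diophantine content of the proof is to choose the modulation $q$ (equivalently, the dilation) by a pigeonhole argument so that $\langle qx/p\rangle$ is simultaneously small on the large spectrum $\textbf{D}=\{x\neq 0:|\widehat{\chi_A}(x)|\geq\delta p\}$, forcing $|\widehat{\chi_A}(x)||1-f^{k+l}(x)|\leq\delta p$ everywhere; hypothesis (\ref{3}) is exactly the condition $p>\prod_{x\in\textbf{D}}a_x^{-1}$ that makes this pigeonhole succeed, given $|\textbf{D}|\leq\alpha/\delta^2$. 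The bridge back to $A'$ is then only the \emph{one-sided} pointwise inequality $\chi_1\geq\frac{\varepsilon_1}{4}\chi_{A'}$ (valid precisely because $A'$ consists of intervals where $A$ has density at least $\varepsilon_1/2$), which converts the hypothesis of (iii) into $g(x)\geq\varepsilon_1^{k+l}(\varepsilon_2p)^{k+l-1}/4^{k+l}$, after which your Chebyshev step applied to $\|f_A-g\|_2^2$ bounds the exceptional set by $\varepsilon_3p$. In short: the correct comparison object is the smoothed convolution $g$, not $f_{A'}$, and only a lower bound relating $g$ to $f_{A'}$ is ever required.
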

\begin{proof}
Let $h\in \{0,\dots,p-1\},$ and $\textbf{R}_{h,L}$ be partition of $\textbf{Z}_p$.\\

$(i)$ \,\,
For given set $A \subset \textbf{Z}_p$ we define $A'\subset \textbf{Z}_p$ as the union of intervals $J_i^h $ of the partition $\textbf{R}_{h, L}$, such that $|A \cap J_i^h| \geq \varepsilon_1 L/2 $. From the definition it follows that $A'$ is $L$-granular. It is easy to see that $(-A)' = -(A')$.\\

$(ii)$ \,\,
Let $x \in A \setminus A'$. Then either $x \in J_h$ or $ x \in A \cap J_i^h$, $(i = 0, \dots, \lfloor p/L \rfloor - 1),$ and $|A\cap J_i^h| \leq \varepsilon_1 L/2.$ In the first case we have $|J_h|<L, $ and inequality (\ref{3})  implies $L \leq \varepsilon_1 p/2.$ Thus,
$$
|A\setminus A'|\leq \frac{\varepsilon_1 L}{2} \cdot \frac{p}{L} + L \leq \varepsilon_1 p.
$$

$(iii)$ \,\,
Let $\widehat{\chi_A}(x)$ be the Fourier transform of the characteristic function $\chi_A$ of $A$, so that
\[
\widehat{\chi_A}(x) = \sum_{y\in \textbf{Z}_p}{\chi_A{(y)}e^{2\pi i\frac{yx}{p}}} = \sum_{y\in A}{e^{2\pi i\frac{yx}{p}}}
\]
for all $x\in \textbf{Z}_p.$ Take $\delta=4^{-(k+l)}\varepsilon_1^{k+l} \varepsilon_2^{k+l-1} \varepsilon_3^{1/2} \alpha^{-(k+l)+ 3/2},$ where $\varepsilon_1, \varepsilon_2$ and~$\varepsilon_3$ are from inequality (\ref{3}). Set $\textbf{D} = \{x\neq 0: |\widehat{\chi_A}(x)|\geq \delta p \}.$  We define the function $f(x)$ as follows:
\[
f(x)=\frac{1}{2L-1}\sum_{j=-(L-1)}^{L-1}{e^{2\pi i\frac{jqx}{p}}}.
\]
In the future we will show that there exists $q \in \textbf{Z}_p \setminus \{0 \}$ such that for all $x \in \textbf{Z}_p$ it holds
\begin{equation}\label{4}
{|\widehat{\chi_A}(x)|}|1-{f^{k+l}(x)}|\leq \delta p.
\end{equation}
The inequality (\ref{4}) obviously holds for the case $x=0,$ since $f(0)=1$, as well as for the case $|\widehat{\chi_A}(x)| \leq \delta p,$ since $f(x) \in [-1,1].$ Thus, it remains to show the existence of $q$ such that the inequality (\ref{4}) holds for all $x \in \textbf{D}$. First we estimate the value of $1-f(x).$ Denote by $\left< x \right> $ the distance from $x$ to the nearest integer. We use the fact that $1- \cos(2\pi x) \leq 2{\pi}^2 {\left<x\right>}^2$. Then
\[
1-f(x)= \frac{2}{2L-1}\sum_{j=1}^{L-1}{\left(1- \cos\frac{2\pi jqx}{p}\right)} \leq  \frac{4{\pi}^2}{2L-1}\sum_{j=1}^{L-1}{{\left<\frac{jqx}{p}\right>}^2}\leq \]
\begin{equation}\label{5} \leq \frac{4{\pi}^2}{2L-1} {\left<\frac{qx}{p}\right>}^2\sum_{j=1}^{L-1}{j^2} \leq \frac{2{\pi}^2 L^2}{3} {\left<\frac{qx}{p}\right>}^2.
\end{equation}
Recall that for $|x|\leq 1$
\begin{equation}\label{astican}
1-x^m=(1-x)(1+x+x^2+\dots+x^{m-1})\leq m(1-x).
\end{equation}
From (\ref{5}) and (\ref{astican}) it follows
\[
{|\widehat{\chi_A}(x)|}|1-{f^{k+l}(x)}|\leq (k+l){|\widehat{\chi_A}(x)|}|1-f(x)|\leq 8(k+l)L^2 {\left<qx/p\right>}^2 {|\widehat{\chi_A}(x)|}.
\]
Note that if the inequality
\begin{equation}\label{dirixle}
\left<\frac{qx}{p}\right>\leq \frac{1}{\sqrt{8(k+l)}L}{\left(\frac{\delta p}{{|\widehat{\chi_A}(x)|}}\right)}^{1/2}
\end{equation}
holds for some $q \in \textbf{Z}_p \setminus \{0\} $ and for all $x \in \textbf{D}$ then the inequality (\ref{4}) also holds. Now we will prove that such $q$ exists. By definition, we have
$$
\left<qx/p\right>=\min\{(qx\pmod{p})/p,(p-qx\pmod{p})/p\}.
$$
Set $|\textbf{D}|=d$, $\textbf{D}=\{r_1,\dots,r_d\}$. We denote $a_i = (1/\sqrt{8(k+l)}L){\left(\delta p/{|\widehat{\chi_A}(r_i)|}\right)}^{1/2}$. Then the inequality (\ref{dirixle}) can be rewritten as
\begin{equation}\label{dirixle1}
\min\{qr_i\pmod{p}, p-qr_i\pmod{p}\}\leq pa_i, \,\,\,\,\text{where} \,\,\,\,\,\,i=1,\dots, d.
\end{equation}
Denote the set $\{(x_1,\dots,x_d) :x_1,\dots,x_d\in \textbf{Z}_p\}$ by $\textbf{Z}_p^d$. We split $\textbf{Z}_p^d$ on disjoint subsets
\[
\textbf{Z}_p^d=\bigcup_{(i_1,\dots,i_d)} {\textbf{Q}_{i_1,\dots,i_d}},
\]
where
\[
\textbf{Q}_{i_1,\dots,i_d}=\{(x_1,\dots,x_d) :i_jpa_j< x_j\leq (i_j +1)pa_j, j=1,\dots, d\}.
\]
Let $\mu_d$ be number of different sets of $\textbf{Q}_{i_1,\dots,i_d}$. Using the fact that $0\leq i_j\leq 1/a_j-1,$ $j=1,\dots,d,$ we have
$$
\mu_d\leq \prod_{i=1}^{d}{\frac{1}{a_i}}.
$$
Let us consider the following $p-1$ elements of $\textbf{Z}_p^d$:
$$
\left(qr_1\pmod{p},\dots,qr_d\pmod{p}\right),\,\,\, \text{where} \,\,\, r_1,\dots,r_d\in \textbf{D},\,\,\, q=1,\dots, p-1.
$$
We show that if
\begin{equation}\label{6}
p > \prod_{i=1}^{d}{\frac{1}{a_i}},
\end{equation}
then there exists $q$ such that for all $r_i \in \textbf{D}$, $i=1, \dots, d $, the inequality (\ref{dirixle1}) holds.
We consider two cases:\\
$(A)$ If $\mu_d = p-1$, then we take $q=q_0,$ where $q_0 \in \textbf{Z}_p \setminus \{0\} $ such that \\
$(q_0r_1 \pmod{p}, \dots, q_0r_d \pmod{p}) \in \textbf{Q}_{0, \dots, 0}$.\\
$(B)$ If $ \mu_d <p-1 $, then by pigeonhole principle, there are $q_1$, $q_2 \in \textbf{Z}_p \setminus \{0\} $ such that the vectors $(q_1r_1 \pmod{p}, \dots, q_1r_d \pmod{p})$ and $(q_2r_1 \pmod{p}, \dots, q_2r_d \pmod{p})$ belong to the same set of $\textbf{Q}_{i_1, \dots, i_d} $. Obviously, when $q = q_1-q_2 $ the inequality (\ref{dirixle1}) holds.
\par We now show that inequality (\ref{6}) is a consequence of (\ref{3}). Indeed, by the Parseval's identity, we have
\begin{equation}\label{7}
p^{-1}\left({\sum_{x\in \textbf{D}}{{|\widehat{\chi_A}(x)|}^2}  + \sum_{x\in \textbf{Z}_p\setminus \textbf{D}}{{|\widehat{\chi_A}(x)|}^2}}\right) = \sum_{x\in \textbf{Z}_p }{{|\chi_A(x)|}^2} = \alpha p.
\end{equation}
From (\ref{7}) it follows
\begin{equation}\label{8}
\sum_{x\in \textbf{D}}{{|\widehat{\chi_A}(x)|}^{2}} \leq \alpha p^2.
\end{equation}
From (\ref{8}) and the  arithmetic and geometric mean inequality, we get
\[
{\left(\prod_{x\in \textbf{D}}{{|\widehat{\chi_A}(x)|}^{2}}\right)}^{1/d}\leq \frac{1}{d}\sum_{x\in \textbf{D}}{{|\widehat{\chi_A}(x)|}^{2}}\leq \frac{\alpha p^2}{d}.
\]
i.e.
\begin{equation}\label{99000001}
\prod_{x\in \textbf{D}}{{|\widehat{\chi_A}(x)|}}\leq {\left(\frac{\alpha p^2}{d}\right)}^{d/2}.
\end{equation}
From (\ref{99000001}) we get
\begin{equation}\label{9}
{(\sqrt{8(k+l)}L)}^d {\left(\prod_{x\in \textbf{D}}{\frac{|\widehat{\chi_A}(x)|}{\delta p}}\right)}^{1/2} \leq {(\sqrt{8(k+l)}L\alpha^{1/4} \delta^{-1/2}d^{-1/4})}^d.
\end{equation}
It is easy to see that the right-hand side of (\ref{9}) is an increasing function of $d$ in the range $d < 64{(k+l)}^2 L^4 \alpha / \delta^2 e.$\\
On the other hand, from (\ref{8}) we have $d \delta^2p^2 \leq \alpha p^2.$ Hence, $d \leq\alpha / \delta^2$. Consequently
\[
{(\sqrt{8(k+l)}L\alpha^{1/4} \delta^{-1/2}d^{-1/4})}^d \leq {(\sqrt{8(k+l)}L)}^{\alpha / \delta^2}.
\]
Recall that $\delta = 4^{- (k + l)} \varepsilon_1^{k + l} \varepsilon_2^{k + l-1} \varepsilon_3^{1/2} \alpha^{- (k + l)+3/2}$. From this it follows that there exists $q$ such that the inequality (\ref{4}) holds. Moreover, without loss of generality we can assume $q = 1$ (this can be achieved by selecting an appropriate dilation of the set $A$).
\par Define two functions $ \chi_1(x)$ and $\chi_2(x)$ as follows:
\[
\chi_1(x) = \frac{1}{|\mathcal{J}|}(\chi_A \ast \chi_\mathcal{J})(x),
\]
\[
\chi_2(x) = \frac{1}{|\mathcal{J}|}(\chi_{-A} * \chi_\mathcal{J})(x),
\]
where $\mathcal{J} = \{-(L-1), \dots , L-1\}.$ From (\ref{7892}) it follows that
\begin{equation}\label{wwwww}
\chi_1(x) = \frac{1}{|\mathcal{J}|}|A\cap (\mathcal{J} + x)|,
\end{equation}
\begin{equation}\label{wwwww0999}
\chi_2(x) = \frac{1}{|\mathcal{J}|}|(-A)\cap (\mathcal{J} + x)|,
\end{equation}
and from (\ref{01987199}) we have $\widehat{\chi_1}(x) = \widehat{\chi_A}(x)f(x)$ and $\widehat{\chi_2}(x) = \widehat{\chi_{-A}}(x)f(x)$. Hence, by Parseval's  identity and from (\ref{01987199}) we get
\[
\sum_{x\in \textbf{Z}_p}{{\left|(\underbrace{\chi_A \ast \dots \ast \chi_A}_{k}\ast \underbrace{\chi_{-A} \ast \dots \ast \chi_{-A}}_{l})(x)-(\underbrace{\chi_1 \ast \dots \ast \chi_1}_{k}\ast \underbrace{\chi_2 \ast \dots \ast \chi_2}_{l})(x)\right|}^2} =
\]
\[
=\!\!p^{-1}\!\!\!\sum_{x\in \textbf{Z}_p}{{\!\!\left|(\widehat{\underbrace{\chi_A \ast \dots \ast \chi_{A}}_{k}\ast \underbrace{\chi_{-A} \ast \dots \ast \chi_{-A}}_{l}})(x)\! -\!(\widehat{\underbrace{\chi_1 \ast \dots \ast \chi_{1}}_{k}\ast \underbrace{\chi_2 \ast \dots \ast \chi_2}_{l}})(x)\right|}^2}
\]
\[
=p^{-1}\sum_{x\in \textbf{Z}_p}{{\left|{\widehat{\chi_A}^{k}(x)}{\widehat{\chi_{-A}}^{l}(x)} -{\widehat{\chi_1}^{k}(x)}{\widehat{\chi_2}^{l}(x)}\right|}^2}=
\]
\[
=p^{-1}\sum_{x\in \textbf{Z}_p}{{{\left|\widehat{\chi_A}(x)\right|}^{2k}}{{\left|\widehat{\chi_{-A}}(x)\right|}^{2l}}{\left|1 -{f^{k+l}(x)}\right|}^2}\leq
\]
\begin{equation}\label{10}
\leq p^{-1}{\left(\sup_{x\in \textbf{Z}_p}{{{{\left|\widehat{\chi_A}(x)\right|}^{k-1}}}{{{\left|\widehat{\chi_{-A}}(x)\right|}^{l}}} \left|1-{f^{k+l}(x)}\right|}\right)}^2\sum_{x\in \textbf{Z}_p}{{\left|\widehat{\chi_A}(x)\right|}^{2}}.
\end{equation}
We have
\begin{equation}\label{tran}
\left|\widehat{\chi_A}(x)\right|=\left|\sum_{y\in \textbf{Z}_p}{\chi_A{(y)}e^{2\pi i\frac{yx}{p}}}\right| = \left|\sum_{y\in A}{e^{2\pi i\frac{yx}{p}}}\right|\leq \sum_{y\in A}{\left|e^{2\pi i\frac{yx}{p}}\right|}= \alpha p,
\end{equation}
\begin{equation}\label{trancho}
\left|\widehat{\chi_{-A}}(x)\right|=\left|\sum_{y\in \textbf{Z}_p}{\chi_{-A}{(y)}e^{2\pi i\frac{yx}{p}}}\right| = \left|\sum_{y\in -A}{e^{2\pi i\frac{yx}{p}}}\right|\leq \sum_{y\in -A}{\left|e^{2\pi i\frac{yx}{p}}\right|}= \alpha p.
\end{equation}
From (\ref{4}), (\ref{7}), (\ref{10}), (\ref{tran}) and (\ref{trancho}) it follows
\[
\sum_{x\in \textbf{Z}_p}{{\left|(\underbrace{\chi_A \ast \dots \ast \chi_A}_{k}\ast \underbrace{\chi_{-A} \ast \dots \ast \chi_{-A}}_{l})(x)-(\underbrace{\chi_1 \ast \dots \ast \chi_1}_{k}\ast \underbrace{\chi_2 \ast \dots \ast \chi_2}_{l})(x)\right|}^2} \leq
\]
\[
\leq {\left(\sup_{x\in \textbf{Z}_p}{{|\widehat{\chi_A}(x)|}\left|1-{f^{k+l}(x)}\right|}\right)}^2 {\alpha}^{2(k+l)-3} p^{2(k+l)-3} \leq
\]
\begin{equation}\label{11}
\leq {\alpha}^{2(k+l)-3} \delta^2 p^{2(k+l)-1}.
\end{equation}
Suppose that $x \in A'$ ($x \in -A'$). Then there exists an interval $\mathcal{I}$ of length $L$ such that $\mathcal{I}\subseteq \{x-(L-1), \dots, x + (L-1)\}$ and $x\in \mathcal{I}.$ From definition of $A'$ ($-A'$) it follows that $|\mathcal{I} \cap A|\geq \varepsilon_1 L/2$ ($|\mathcal{I} \cap (-A)|\geq \varepsilon_1 L/2$).
From the definition of $\chi_1(x)$ ($\chi_2(x)$) it follows that $\chi_1(x) \geq \varepsilon_1 /4$ ($\chi_2(x)\geq \varepsilon_1 /4$). Observe, that $\chi_1(x) \geq \varepsilon_1 \chi_{A'}{(x)}/4$ and $\chi_2(x) \geq \varepsilon_1 \chi_{-A'}{(x)}/4$ hold for all $x \in \textbf{Z}_p$ . From this and (\ref{7892}) it follows that
\[
(\underbrace{\chi_1 \ast \dots \ast \chi_1}_{k}\ast \underbrace{\chi_2 \ast \dots \ast \chi_2}_{l})(x) \geq
\]
\begin{equation}\label{klro1}
\geq \varepsilon_1^{k+l} (\underbrace{\chi_{A'} \ast \dots \ast \chi_{A'}}_{k}\ast \underbrace{\chi_{-A'} \ast \dots \ast \chi_{-A'}}_{l})(x)/ 4^{k+l}
\end{equation}
for all $x \in \textbf{Z}_p.$ In the case
\begin{equation}\label{12}(\underbrace{\chi_{A'} \ast \dots \ast \chi_{A'}}_{k}\ast \underbrace{\chi_{-A'} \ast \dots \ast \chi_{-A'}}_{l})(x) \geq {(\varepsilon_2  p)}^{k+l-1},
\end{equation}
by (\ref{klro1}) we have
\begin{equation}\label{klro}
(\underbrace{\chi_1 \ast \dots \ast \chi_1}_{k} \ast \underbrace{\chi_2 \ast \dots \ast \chi_2}_{l})(x) \geq {\varepsilon_1}^{k+l} {(\varepsilon_2  p)}^{k+l-1}/4^{k+l}.
\end{equation}
Now we show that the number of elements $x \in \textbf{Z}_p$ such that satisfying (\ref{12}) and $(\underbrace{\chi_{A} \ast \dots \ast \chi_{A}}_{k} \ast \underbrace{\chi_{-A} \ast \dots \ast \chi_{-A}}_{l})(x) =0$, does not exceed $\varepsilon_3p$. Denote the set of such elements by $\textbf{F}$. Observe, that for every $x\in \textbf{F}$
\[
{|(\underbrace{\chi_A \ast \dots \ast \chi_A}_{k}\ast \underbrace{\chi_{-A} \ast \dots \ast \chi_{-A}}_{l})(x)-(\underbrace{\chi_1 \ast \dots \ast \chi_1}_{k}\ast \underbrace{\chi_2 \ast \dots \ast \chi_2}_{l})(x)|}^2 \geq
\]
\begin{equation}\label{13}
\geq \frac{{\varepsilon_1}^{2(k+l)} {\varepsilon_2}^{2(k+l-1)} p^{2(k+l-1)} } {4^{2(k+l)}}.
\end{equation}
By (\ref{11}) and~(\ref{13})
\[
{\alpha}^{2(k+l)-3} \delta^2 p^{2(k+l)-1}\geq
\]
\[
\geq \sum_{x\in \textbf{Z}_p}{{\left|(\underbrace{\chi_A \ast \dots \ast \chi_A}_{k}\ast \underbrace{\chi_{-A} \ast \dots \ast \chi_{-A}}_{l})(x)-(\underbrace{\chi_1 \ast \dots \ast \chi_1}_{k}\ast \underbrace{\chi_2 \ast \dots \ast \chi_2}_{l})(x)\right|}^2}=
\]
\[
=\!\!\sum_{x\in \textbf{F}}{{\!\left|(\underbrace{\chi_A \ast \dots \ast \chi_A}_{k}\ast \underbrace{\chi_{-A} \ast \dots \ast \chi_{-A}}_{l})(x)\!-\!(\underbrace{\chi_1 \ast \dots \ast \chi_1}_{k}\ast \underbrace{\chi_2 \ast \dots \ast \chi_2}_{l})(x)\right|}^2}+
\]
\[
+ \!\!\sum_{x\in (\textbf{Z}_p\setminus \textbf{F})}{{\!\left|(\underbrace{\chi_A \ast \dots \ast \chi_A}_{k}\ast \underbrace{\chi_{-A} \ast \dots \ast \chi_{-A}}_{l})(x)\!-\!(\underbrace{\chi_1 \ast \dots \ast \chi_1}_{k}\ast \underbrace{\chi_2 \ast \dots \ast \chi_2}_{l})(x)\right|}^2}
\]
\[
\geq |\textbf{F}|\frac{{\varepsilon_1}^{2(k+l)} {\varepsilon_2}^{2(k+l-1)} p^{2(k+l-1)} } {4^{2(k+l)}}+
\]
\[
+\!\! \sum_{x\in (\textbf{Z}_p\setminus \textbf{F})}{{\!\left|(\underbrace{\chi_A \ast \dots \ast \chi_A}_{k}\ast \underbrace{\chi_{-A} \ast \dots \ast \chi_{-A}}_{l})(x)\!-\!(\underbrace{\chi_1 \ast \dots \ast \chi_1}_{k}\ast \underbrace{\chi_2 \ast \dots \ast \chi_2}_{l})(x)\right|}^2}.
\]
This implies
\[
|\textbf{F}| \leq \frac{4^{2(k+l)} \alpha^{2(k+l)-3} \delta^2}{{\varepsilon_1}^{2(k+l)} {\varepsilon_2}^{2(k+l-1)}} p\leq \varepsilon_3 p.
\]
\end{proof}

\section{The proof of Theorem \ref{t6}}
\subsection{The upper bound}
Let $k$,$l$ be nonnegative integers with $k+l\geq 2.$ Suppose that $s$ satisfies $es(k + l +1)\leq 2^s$. We divide a partition of $\textbf{SS}_{k, l}(\textbf{Z}_p)$ into two parts:
\begin{equation}\label{unnnnn}
\textbf{SS}_{k, l}(\textbf{Z}_p)=\textbf{SS}'_{k, l, s}(\textbf{Z}_p)\cup \textbf{SS}''_{k, l, s}(\textbf{Z}_p),
\end{equation}
where
$$\textbf{SS}'_{k, l, s}(\textbf{Z}_p)=\{B\in \textbf{SS}_{k, l}(\textbf{Z}_p) : B=kA-lA \,\,\, \text{and}\,\,\, |A|\leq p/(k+l+1)s\},$$
$$\textbf{SS}''_{k, l, s}(\textbf{Z}_p)=\{B\in \textbf{SS}_{k, l}(\textbf{Z}_p) : B=kA-lA \,\,\, \text{and}\,\,\, |A|> p/(k+l+1)s\}.$$
It is obvious that
\begin{equation}\label{unnnnn1}
|\textbf{SS}_{k, l}(\textbf{Z}_p)|\leq |\textbf{SS}'_{k, l, s}(\textbf{Z}_p)|+ |\textbf{SS}''_{k, l, s}(\textbf{Z}_p)|.
\end{equation}
Since every set $A \subseteq \textbf{Z}_p$ generates one set of the form $kA-lA$ we obtain
\begin{equation}\label{trem}
\left|\textbf{SS}'_{k, l, s}(\textbf{Z}_p)\right|\leq \left|\textbf{T}_{k+l, s}(\textbf{Z}_p)\right|.
\end{equation}
By (\ref{t11}) and (\ref{trem})  we have
\begin{equation}\label{trem1}
\left|\textbf{SS}'_{k, l, s}(\textbf{Z}_p)\right|\leq 2^{p/(k+l+1)}.
\end{equation}
\par Now we prove an upper bound for $|\textbf{SS}''_{k, l, s}(\textbf{Z}_p)|$. Suppose that the cardinality of $A \subseteq \textbf{Z}_p$ is larger than $p/(k+l+1)s.$ Let $p$ be a prime number such that for some nonnegative integers $k, l, L>0$ and positive real numbers $\varepsilon_1, \varepsilon_2$ and $\varepsilon_3$ the condition (\ref{3}) is fulfilled. By Lemma \ref{t13} there exists a subset $A'$ with properties $(i)-(iii).$ We estimate the number of $(k,l)$-sumsets $kA-lA$ by counting pairs $(A', kA-lA).$
\par Now let $A'\in \textbf{G}_L(\textbf{Z}_p)$  be given. For any subset $C \subseteq \textbf{Z}_p$ we denote by $\overline{C}$ the complement of the subset $C$ in $\textbf{Z}_p$.
\par If $|A'| \geq p/(k + l +1)$, then from $(iii)$ of Lemma \ref{t13} we obtain that
$\overline{kA-lA}$ is a subset of the union of the  set
$\overline{S_{{(\varepsilon_2 p)}^{k+l-1}, k+l}(\underbrace{\chi_{A'}, \dots ,\chi_{A'}}_{k},\underbrace{ \chi_{-A'}, \dots ,\chi_{-A'}}_{l})}$ and a set of cardinality not exceeding $\varepsilon_3 p$. By Lemma \ref{t10} we have
\[
|S_{{(\varepsilon_2 p)}^{k+l-1}, k+l}(\underbrace{\chi_{A'}, \dots ,\chi_{A'}}_{k},\underbrace{ \chi_{-A'}, \dots ,\chi_{-A'}}_{l})|\geq
\]
\[
\geq \min(p,(k+l)|A'| - (k+l) + 2) - 2({(\varepsilon_2 p)}^{k+l-1}p)^{1/2}.
\]
If $|A'|\geq p/(k + l +1)$, we obtain
\[
|\overline{S_{{(\varepsilon_2 p)}^{k+l-1}, k+l}(\underbrace{\chi_{A'}, \dots ,\chi_{A'}}_{k},\underbrace{ \chi_{-A'}, \dots ,\chi_{-A'}}_{l})}|=
\]
\[
=p- |S_{{(\varepsilon_2 p)}^{k+l-1}, k+l}(\underbrace{\chi_{A'}, \dots ,\chi_{A'}}_{k},\underbrace{ \chi_{-A'}, \dots ,\chi_{-A'}}_{l})| \leq
\]
\[
\leq p/(k+l+1)+2{\varepsilon_2}^{(k+l-1)/2}p^{(k+l)/2} + (k+l - 2).
\]
\par It is obvious that for any subset $B \subseteq \textbf{Z}_p$ the set $kB-lB$ uniquely determines the set $\overline{kB-lB}$. From above it follows that the number of choices $kA-lA$ for given $A'$ of cardinality exceeding $p/(k+l+1) $, is at most
\begin{equation}\label{bozer1}
 2^{p/(k+l+1) +(k+l -2)+({2\varepsilon_2}^{(k+l-1)/2} p^{(k+l-2)/2} + \varepsilon_3)p}.
\end{equation}
\par If $|A'|<p/(k+l+1)$, then by $(i)$ of Lemma \ref{t13} we have $|A \setminus A'| \leq \varepsilon_1p$ . This implies  that $|A| \leq |A'| + \varepsilon_1 p$. Since every set $A \subseteq \textbf{Z}_p$ generates exactly one set of form $kA-lA$, we obtain that the number of choices $kA-lA$ for given $A'$ of cardinality not exceeding $p/(k+l+1) $, is at most
\begin{equation}\label{bozer2}
 2^{p/(k+l+1)+\varepsilon_1 p}.
\end{equation}
From (\ref{bozer1}),\,\,\,(\ref{bozer2}), Lemma \ref{t1111} by applying Lemma \ref{t13} with parameters $\varepsilon_1 = \varepsilon_3 = \varepsilon,$ $L = 1 + \lfloor1/\varepsilon \rfloor$ and~$\varepsilon_2 = \varepsilon ^{2/(k+l-1)} p^{(2-k-l)/(k+l-1)}$, we obtain
\begin{equation}\label{opoas}
|\textbf{SS}''_{k, l, s}(\textbf{Z}_p)| \leq 2^{(p/(k+l+1)) + (k+l-2) + o(p)}.
\end{equation}
From (\ref{unnnnn1}),\,\,\,(\ref{trem1}) and (\ref{opoas}) it follows that
\[
|\textbf{SS}_{k, l}(\textbf{Z}_p)| \leq 2^{p/(k+l+1)} + 2^{(p/(k+l+1)) + (k+l-2) + o(p)}=2^{(p/(k+l+1))+(k+l-2)+o(p)}.
\]

\subsection{The lower bound}
Set  $\textbf{SS}_{k, l}(\textbf{Z}_p, \mathbb{P}) =\{A: \mathbb{P}\subseteq A,\, A\in \textbf{SS}_{k, l}(\textbf{Z}_p)\}$ and $L=\lfloor p/(2(k+l)-1)\rfloor -1$.
\begin{lemma}\label{30}
Let $k$,$l$ be nonnegative integers with $k+l\geq 2,$ and let $\mathbb{P}\subseteq \textbf{Z}_p$ be arbitrary arithmetic progression of length $(k+l)(L-1)+1$. Then there exists a positive constant $\textbf{C}_{k, l}$ such that
\[
|\textbf{SS}_{k, l}(\textbf{Z}_p, \mathbb{P})| \geq \textbf{C}_{k, l}2^{p/(2(k+l)-1)}.
\]
\end{lemma}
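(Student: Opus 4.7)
The plan is to construct explicitly a family of $(k,l)$-sumsets containing $\mathbb{P}$, parametrized by the subsets of an auxiliary set $\mathcal{I}\subseteq\mathbf{Z}_p$ of size at least $L$; each subset gives rise to a distinct sumset, yielding a total count of at least $2^L \geq \textbf{C}_{k,l} 2^{p/(2(k+l)-1)}$.

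First I would fix a base set $B_0$ whose $(k,l)$-sumset is exactly $\mathbb{P}$. Writing $\mathbb{P}=\{a+jd : 0\leq j\leq (k+l)(L-1)\}$, I would take $B_0=\{c+jd : 0\leq j\leq L-1\}$, where $c$ is chosen so that $(k-l)c\equiv a+l(L-1)d\pmod p$; by primality of $p$ (and the hypothesis $k+l\geq 2$, handling the exceptional case $k=l$ after a preliminary dilation of $\mathbb{P}$, which is permissible since dilation is a bijection on $\textbf{SS}_{k,l}(\textbf{Z}_p)$), such $c$ is well defined. A direct expansion then shows $kB_0-lB_0=\mathbb{P}$.

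Next I would locate the auxiliary index set $\mathcal{I}$. Let $\mathcal{I}$ be an arithmetic progression of common difference $d$ and length $L$, placed in the complement of $\mathbb{P}$ (note $|\mathbf{Z}_p\setminus\mathbb{P}|\approx (k+l-1)p/(2(k+l)-1)$, so there is room). For any $T\subseteq\mathcal{I}$ set $B_T=B_0\cup T$. The sumset expands as
\[
kB_T-lB_T=\bigcup_{0\leq i\leq k,\, 0\leq j\leq l}\bigl((iT-jT)+((k-i)B_0-(l-j)B_0)\bigr),
\]
and in particular $kB_T-lB_T\supseteq kB_0-lB_0=\mathbb{P}$. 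The placement of $\mathcal{I}$ would be chosen so that the $(i,j)=(1,0)$ component, namely $T+((k-1)B_0-lB_0)$, lies in a prescribed interval disjoint from $\mathbb{P}$ and from every other component $(iT-jT)+((k-i)B_0-(l-j)B_0)$ with $(i,j)\neq (1,0)$. This is feasible because the $k+l$ positive and $k+l-1$ differently-translated components together consume roughly $(2(k+l)-1)\cdot L$ of the $p$ available residues, leaving a full length-$L$ window of residues reserved for the $(1,0)$ component alone.

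The main step is to show injectivity of $T\mapsto kB_T-lB_T$ on $\mathcal{P}(\mathcal{I})$. By the disjointness arranged above, intersecting $kB_T-lB_T$ with the reserved window recovers exactly $T+((k-1)B_0-lB_0)$, a union of translates of a fixed arithmetic progression by the elements of $T$; subtracting off the minimal element of $(k-1)B_0-lB_0$ then recovers $T$ itself. Hence distinct $T$ yield distinct sumsets, and the count $2^{|\mathcal{I}|}=2^L$ gives the claimed lower bound with $\textbf{C}_{k,l}=2^{-O(1)}$ absorbing the rounding in the definition of $L$.

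The main obstacle is the packing argument in the previous paragraph: showing that $\mathcal{I}$ can be placed so that the $(1,0)$ component is genuinely isolated from the other $2(k+l)-2$ sumset components and from $\mathbb{P}$. This is precisely the content of the constant $2(k+l)-1$ appearing in the exponent, and once verified, the rest of the proof reduces to the bookkeeping sketched above.
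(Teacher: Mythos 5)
Your overall strategy --- an explicit family $B_T=B_0\cup T$ indexed by subsets $T$ of an auxiliary set, with $\mathbb{P}\subseteq kB_0-lB_0$ guaranteeing containment and a ``reserved window'' guaranteeing injectivity --- is in the spirit of the paper's construction, but both steps you defer to ``bookkeeping'' fail, and the second is fatal. On the packing: each component $(iT-jT)+((k-i)B_0-(l-j)B_0)$ sits inside an interval of length $(i+j)(L-1)+(k+l-i-j)(L-1)+1=(k+l)(L-1)+1=|\mathbb{P}|$, not of length $L$. In particular the $(1,0)$ component occupies a window of length about $(k+l)L$, and already $|\mathbb{P}|+(k+l)L\approx 2(k+l)p/(2(k+l)-1)>p$, so this window cannot even be made disjoint from $\mathbb{P}$, let alone from the remaining components; worse, for $k\geq 2$, $l\geq 1$ the $(2,1)$ component lies in \emph{exactly} the same interval as the $(1,0)$ component, since that interval depends on $(i,j)$ only through $i-j$. (A smaller issue: for $k=l$ no $B_0$ with $kB_0-lB_0=\mathbb{P}$ exists unless $\mathbb{P}$ is symmetric about $0$, and dilation cannot arrange that; you would need containment rather than equality there.)

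Decisively, even if the window were clean, the map $T\mapsto T+J$ with $J=(k-1)B_0-lB_0$ is not injective: $J$ is an arithmetic progression of length $(k+l-1)(L-1)+1\geq L$, and $T+J$ is a union of at most two long intervals from which $T$ cannot be recovered (already $\{0,2\}+\{0,1\}=\{0,1,2\}+\{0,1\}$). The number of distinct sets of the form $T+J$ is only polynomial in $p$, so your family yields nowhere near $2^{L}$ distinct sumsets. This is precisely what the paper's construction avoids: instead of smearing the free set by an interval, it adjoins two \emph{isolated} markers $\pm(2L+1)$ far from the block $\{-L,\dots,L\}$ containing $B$, so the relevant component is a union of translates of $B$ by well-separated points $t(2L+1)$ and $B$ is recoverable from the extreme window; containment of $\mathbb{P}$ is then secured not by a deterministic interval $B_0$ but by a probabilistic argument showing that a random symmetric $B$ (augmented by a small deterministic skeleton $X$) satisfies $\mathbb{P}\subseteq kB-lB$ with probability at least $1/2$, which still leaves $2^{L-(k+l)N_{k,l}-1}$ good choices. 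To repair your argument you would have to replace the translating set $(k-1)B_0-lB_0$ by something with isolated, well-separated elements --- at which point you are essentially forced back to the paper's marker construction.
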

\begin{proof}
Without loss of generality we assume $\mathbb{P}=\{k-lL, \dots , kL-l\}$. All of our sets will be of the form 
$$
A=A(B)=k(B\cup \{-(2L +1), 2L+1\})-l(B\cup \{-(2L +1), 2L+1\}),
$$ 
where $B\subseteq \{-L, -L+1, \dots, L \}$ and $-B=B$. It is easy to see that different sets $B\subseteq \{-L, -L+1, \dots, L \}$ generate different sets $A(B)$.
\par Set $N_{k,l}=\lceil\log{}{(8(k+l)^2)}/\log{}{(4/3)}\rceil$ and
$$
X=\{0,1,\dots, N_{k,l}\}\cup \bigcup_{i=1}^{k+l-1}{\left(\lfloor(i+1)L/(k+l)\rfloor -N_{k,l},\dots,\lceil(i+1) L/(k+l)\rceil \right)}.
$$
We define the set $B\subseteq \{-L, -L+1, \dots, L \}$ as follows:
$$B=B(C)=-C \cup C\cup X \cup -X,$$
where elements of the set $C$ are picked from the set $\{1, \dots , L\}\setminus X$ randomly, independently, with probability $1/2$. Set
$$Y=\{0\}\cup \{k+l,\dots, (k+l)N_{k,l}\}\cup \bigcup_{i=1}^{k+l-1}{\{(i+1)L-(k+l)N_{k,l},\dots,(i+1)L\}}.$$
It is obvious that $-Y\cup Y\subseteq kB-lB.$ If $x\notin kB-lB$, then in the representation $x$ in the form $x=x_1+\dots+x_k-x_{k+1}-\dots-x_{k+l},$ there exists at least one $x_i$\,\,\, $(i\in \{1,\dots,k+l\})$ such that $x_i\notin B$. Set
$$
\mathcal{Q}(x)= \{(x_1,\dots,x_{k+l}): x=\sum_{i=1}^{k}{x_i}-\sum_{j=k+1}^{k+l}{x_j}, 
x_1,\dots, x_{k+l}\in \{-L,\dots, L\}\},
$$
and suppose that $|\mathcal{Q}(x)|=q$.
\par We say that the vectors $(x_1,\dots, x_{k+l})$ and $(y_1,\dots, y_{k+l})$ do not intersect, if $\{x_1,\dots, x_{k+l}\}\cap \{y_1,\dots, y_{k+l}\}=\emptyset.$
\par Set $\mathcal{R}_0=\{(k+l)N_{k,l}+1, \dots , L\}.$ We show that for every $x\in \mathcal{-R}_0\cup \mathcal{R}_0$ the following inequality
\begin{equation}\label{puchik33}
\textbf{Pr}(x\notin kB-lB)\leq {\left(\frac{3}{4}\right)}^{\left\lfloor \frac{|x|}{k+l} \right\rfloor}
\end{equation}
holds. We have
\[
\textbf{Pr}(x\notin kB-lB)=
\]
\[
=\textbf{Pr}
((x_1^1+\dots+x_k^1-x_{k+1}^1-\dots-x_{k+l}^1\notin kB-lB) \& \dots
\]
\[
\dots \&(x_1^q+\dots+x_k^q-x_{k+1}^q-\dots-x_{k+l}^q\notin kB-lB)
)\leq
\]
\[
\leq \textbf{Pr}
((x_1^{11}+\dots+x_k^{11}-x_{k+1}^{11}-\dots-x_{k+l}^{11}\notin kB-lB) \& \dots
\]
\[
\dots \&(x_1^{1n}+\dots+x_k^{1n}-x_{k+1}^{1n}-\dots-x_{k+l}^{1n}\notin kB-lB)
)=
\]
\[
=\textbf{Pr}\left((x_1^{11}\notin B \vee \dots \vee x_{k+l}^{11}\notin B) \& \dots \&(x_1^{1n} \notin B\vee  \dots \vee x_{k+l}^{1n}\notin B)\right)=
\]
\[
=\textbf{Pr}\left((x_1^{11}\notin B) \vee \dots \vee (x_{k+l}^{11}\notin B)\right)\cdot
... \cdot \textbf{Pr}\left((x_1^{1n} \notin B)\vee  \dots \vee (x_{k+l}^{1n}\notin B)\right)=
\]
\[
=\textbf{Pr}\left(\overline{(x_1^{11}\in B) \& \dots \& (x_{k+l}^{11}\in B)}\right)\times \dots
\]
\[
\dots \times \textbf{Pr}\left(\overline{(x_1^{1n}\in B) \& \dots \& (x_{k+l}^{1n}\in B)}\right)=
\]
\[
=\left(1-\textbf{Pr}\left((x_1^{11}\in B) \& \dots \& (x_{k+l}^{11}\in B)\right)\right)\times \dots
\]
\begin{equation}\label{morgan2}
\dots \times \left(1-\textbf{Pr}\left((x_1^{1n}\in B) \& \dots \& (x_{k+l}^{1n}\in B)\right)\right),
\end{equation}
where the vectors $(x_1^i,\dots,x_{k+l}^i)\in \mathcal{Q}(x), i=1,\dots, q$, and the vectors $(x_1^{1j},\dots,$ $ x_{k+l}^{1j}),$ $j=1,\dots ,n\leq q,$ are pairwise disjoint. 
\par Note that the vectors $(x-i(k+l-1),\underbrace{i, \dots , i}_{k-1}, \underbrace{-i, \dots , -i}_{l})$ are pairwise disjoint for every $x\in \mathcal{-R}_0,$ where $-\left\lfloor |x|/(k+l)\right\rfloor\leq i\leq -1,$ and  $x\in \mathcal{R}_0,$ where $1 \leq i\leq \left\lfloor x/(k+l)\right\rfloor.$ From this and (\ref{morgan2}) we obtain the inequality (\ref{puchik33}).
\par Set $\mathcal{L}_j= \{jL+1, \dots , (j+1)L-(k+l)N_{k,l}-1\},$ $j=1,\dots, k+l-1.$ Similarly to the inequality (\ref{puchik33}) we have
\begin{equation}\label{puchikklirik33}
\textbf{Pr}(x\notin kB-lB)\leq {\left(\frac{3}{4}\right)}^{\left\lfloor \frac{(j+1)L -|x|}{k+l} \right\rfloor},
\end{equation}
where $x\in \mathcal{-L}_j\cup \mathcal{L}_j,$ $j=1,\dots, k+l-1.$
\par From (\ref{puchik33}) and (\ref{puchikklirik33}) it is easy to see that

\vspace*{-12pt}
\begin{equation}\label{hav}
\textbf{Pr}\left(\mathbb{P}\nsubseteq kB-lB\right)\leq (k+l)\sum_{x\geq (k+l)N_{k,l}+1}{{\left(\frac{3}{4}\right)}^{\left\lfloor \frac{x}{k+l} \right\rfloor}}.
\end{equation}
Note that if $N_{k,l}\geq  \log{}{(8(k+l)^2)}/\log{}{(4/3)},$ the right-hand side of (\ref{hav}) does not exceed $1/2$. This leads that there exists at least $2^{L-(k+l)N_{k,l}-1}$ subsets $B\subseteq \{-L,-L+1, \dots ,L\}$ such that $\mathbb{P}\subseteq kB-lB$.
\end{proof}
\par Let $k$,$l$ be nonnegative integers with $k+l\geq 2,$ and let $\mathbb{P}\subseteq \textbf{Z}_p$ be arbitrary arithmetic progression of length $(k+l)(L-1)+1$. By Lemma \ref{30} we have
\[
|SS_{k,l}(\textbf{Z}_p)|\geq |\textbf{SS}_{k,l}(\textbf{Z}_p, \mathbb{P})|\geq \textbf{C}_{k,l}2^{p/(2(k+l)-1)}.
\]

\vspace*{-12pt}
\section*{Acknowledgements}

\vspace*{-6pt}
The author expresses his gratitude to his supervisor Prof. A. A. Sapozhenko for the statement of the problem and attention to this work. The research was supported by RFBR project Nr 10-01-00768-a.

\vspace*{-6pt}

\bigskip
\rightline{\emph{Received: March 8, 2012 {\tiny \raisebox{2pt}{$\bullet$\!}} Revised: April 16, 2012}} %% to be completed by the editor

\end{document}